\newcommand{\tclf}{}
\newcommand{\tcand}{}
\newcommand{\tcnn}{}
\newcommand{\tclf}{\\}
\newcommand{\tcand}{&}
\newcommand{\tcnn}{\IEEEnonumber}
\DeclareMathAlphabet{\mathbit}{OML}{cmr}{bx}{it}
\DeclareMathAlphabet{\mathss}{T1}{cmss}{m}{sl}
\DeclareMathAlphabet{\mathssbold}{T1}{cmss}{bx}{sl}
\DeclareMathOperator{\tr}{tr}
\DeclareMathOperator*{\argmax}{argmax}
\DeclareMathOperator*{\argmin}{argmin}
\newcommand{\mb}[1]{\mathbit{#1}}
\newcommand{\zero}{\boldsymbol{0}}
\newcommand{\id}{\mathbf{I}}
\DeclareMathOperator{\diag}{diag}
\DeclareMathOperator{\blockdiag}{blockdiag}
\newcommand{\He}{{\operatorname{H}}}
\newcommand{\Tr}{{\operatorname{T}}}
\newcommand{\st}{\operatorname{s.t.}}
\newcommand{\Expect}[1]{\operatorname{E}\lbrack#1\rbrack}
\newcommand{\ExpectL}[1]{\operatorname{E}\!\left\lbrack#1\right\rbrack}
\newcommand{\MI}[2]{\operatorname{I}\!\left(#1;#2\right)}
\newcommand{\MICond}[3]{\operatorname{I}\!\left(#1;#2\,|\,#3\right)}
\newcommand{\Cvq}{\mb{C}}
\newcommand{\Cvqtilde}{\widetilde{\mb{C}}}
\newcommand{\CvqS}{\Cvq}
\newcommand{\ijInd}{{k,\ell}}
\newcommand{\mnInd}{{m,n}}
\newcommand{\jiIndEq}{k=n~\text{\normalfont and}~\ell=m}
\newcommand{\CvqSijTr}{[\CvqS^\Tr]_\ijInd}
\newcommand{\CvqSji}{\CvqSijTr}
\newcommand{\mt}[1]{\mathrm{#1}}
\newcommand{\nodeS}{\mt{S}}
\newcommand{\nodeRe}{\mt{R}}
\newcommand{\nodeR}{\nodeRe}
\newcommand{\nodeD}{\mt{D}}
\newcommand{\uS}{\mb u}
\newcommand{\vS}{\mb v}
\newcommand{\CvS}{\mb C_{\vS}}
\newcommand{\starCvS}{\CvS^\star}
\newcommand{\wS}{\mb w}
\newcommand{\CwS}{\mb C_{\wS}}
\newcommand{\starCwS}{\CwS^\star}
\newcommand{\zS}{\mb z}
\newcommand{\xS}{\mb x_\nodeS}
\newcommand{\xR}{\mb x_\nodeRe}
\newcommand{\yR}{\mb y_\nodeRe}
\newcommand{\yD}{\mb y_\nodeD}
\newcommand{\nR}{\mb\eta_\nodeRe}
\newcommand{\CnR}{\mb C_{\nR}}
\newcommand{\nD}{\mb\eta_\nodeD}
\newcommand{\CnD}{\mb C_{\nD}}
\newcommand{\HDR}{\mb H_{\nodeD\nodeRe}}
\newcommand{\HDS}{\mb H_{\nodeD\nodeS}}
\newcommand{\HRS}{\mb H_{\nodeRe\nodeS}}
\newcommand{\GDS}{\mb G_{\nodeD\nodeS}}
\newcommand{\GRS}{\mb G_{\nodeRe\nodeS}}
\newcommand{\barGDS}{\mb{\bar G}_{\nodeD\nodeS}}
\newcommand{\Gk}{\GDS}
\newcommand{\Gj}{\GRS}
\newcommand{\barGk}{\barGDS}
\newcommand{\CzSxR}{\mb{R}}
\newcommand{\CzSxRtilde}{\widetilde{\mb{R}}}
\newcommand{\CzSxRLong}{\mb{C}_{\left[ \substack{\zS \\ \xR} \right]}}
\newcommand{\Amat}{\mb A}
\newcommand{\DS}{\mb D_\nodeS}
\newcommand{\DR}{\mb D_\nodeR}
\newcommand{\PS}{P_\nodeS}
\newcommand{\PR}{P_\nodeR}
\newcommand{\dRS}{d_{\nodeR\nodeS}}
\newcommand{\dDS}{d_{\nodeD\nodeS}}
\newcommand{\dDR}{d_{\nodeD\nodeR}}
\newcommand{\NS}{{N_\nodeS}}
\newcommand{\NR}{{N_\nodeRe}}
\newcommand{\ND}{{N_\nodeD}}
\newcommand{\idNS}{\id}
\newcommand{\idNR}{\id}
\newcommand{\idND}{\id}
\newcommand{\idJoint}{\id}
\newcommand{\idNk}[1][k]{\id}
\newcommand{\RA}{R_\mt{A}}
\newcommand{\RB}{R_\mt{B}}
\newcommand{\hatRA}{\hat{R}_\mt{A}}
\newcommand{\hatRB}{\hat{R}_\mt{B}}
\newcommand{\gev}{\mb F}
\newcommand{\gevf}{\mb f}
\newcommand{\gevL}{\mb \Lambda}
\newcommand{\gevl}{\lambda}
\newcommand{\gevt}{\theta}
\newcommand{\gevN}{N}
\newcommand{\gevLbar}{\mb{\bar \Lambda}}
\newcommand{\gevD}{\mb \Xi}
\newcommand{\gevA}{\mb \Phi}
\newcommand{\gevB}{\mb \Psi}
\newcommand{\Ms}{n}
\newcommand{\ms}{i}
\newcommand{\cpeps}{\epsilon_\text{CP}}
\newcommand{\cpDelta}{\Delta}
\newcommand{\cpL}{L}
\newcommand{\cpU}{U}
\newcommand{\varepsilonprime}{{\varepsilon'}}
\newcommand{\Lam}{\mb\Theta}
\newcommand{\lam}{\theta}
\newtheorem{theorem}{Theorem}
\newtheorem{prop}{Proposition}
\newtheorem{remark}{Remark}
\begin{document}
\title{Maximizing the Partial Decode-and-Forward Rate in the Gaussian MIMO Relay Channel}

\author{Christoph~Hellings,~\IEEEmembership{Member,~IEEE}, Patrick Gest, Thomas Wiegart,~\IEEEmembership{Student~Member,~IEEE},
        and~Wolfgang~Utschick,~\IEEEmembership{Senior~Member,~IEEE}%
\thanks{The authors conducted this research at Technische Universit\"at M\"unchen, Professur f\"ur Methoden der Signalverarbeitung, 80290 M\"unchen, Germany, 
Telephone: +49 89 289-28520, e-mail: hellings@tum.de, patrick.gest@tum.de, thomas.wiegart@tum.de, utschick@tum.de.
C. Hellings is now with Department of Physics, ETH Zurich, 8093 Zurich, Switzerland. T. Wiegart is now with Technische Universit\"at M\"unchen, Institute for Communications Engineering.

The general approach of the proposed algorithm was presented at the 20th International ITG Workshop on Smart Antennas (WSA 2016) \cite{WiHeUt16},
and the (quasi) closed-form solution for the inner problem was presented at the 
19th IEEE International Workshop on Signal Processing Advances in Wireless Communications (SPAWC 2018) \cite{HeGeUt18}.
The explicit calculation of the derivative in Section~\ref{sec:grad} and the modified algorithm for verifying global optimality in Section~\ref{sec:algo} are novel contributions of this journal version.}}

\maketitle

\begin{abstract}
It is known that circularly symmetric Gaussian signals are the optimal input signals for the partial decode-and-forward (PDF) coding scheme in the Gaussian multiple-input multiple-output (MIMO) relay channel,
but there is currently no method to find the optimal covariance matrices nor to compute the optimal achievable PDF rate
since the optimization is a non-convex problem in its original formulation.
In this paper, we show that it is possible to find a convex reformulation of the problem by means of an approximation and a primal decomposition.
We derive an explicit solution for the inner problems as well as an explicit gradient for the outer problem,
so that the efficient cutting-plane method can be applied for solving the outer problem.
As the accuracy of this provably convergent algorithm might be impaired by the previous approximation,
we additionally propose a modified algorithm, for which we cannot give a converge guarantee,
but which provides rigorous upper and lower bounds to the optimum of the original rate maximization.
In numerical simulations, these bounds become tight in all considered instances of the problem,
showing that the proposed method manages to find the global optimum in all these instances.
\end{abstract}

\begin{IEEEkeywords}
\ifCLASSOPTIONpeerreview\vspace*{-3mm}\fi
Gaussian relay channel, multiple-input multiple-output (MIMO), partial decode-and-forward, convex optimization, sensitivity analysis, cutting plane method, generalized eigenvalue decomposition.%
\end{IEEEkeywords}
\IEEEpeerreviewmaketitle

\section{Introduction}
The use of relay stations is a candidate technology for extending the coverage of base stations and increasing the achievable data rates of cell-edge users.
An information theoretical model for such a relaying scenario was first introduced in \cite{Me71}.
This relay channel model describes a three-node network with only a single source, a single relay, and a single destination,
and can be considered as a simplified building block for larger relaying networks,
or as a means for studying basic properties of relaying before turning the attention to more complicated scenarios.
However, even for this basic relaying scenario, the channel capacity is still unknown and the optimal relaying strategy remains an open problem.

To obtain lower bounds to the capacity, researchers have studied
the achievable rates of various relaying strategies (see also \cite[Ch.~9]{Kr07}, \cite[Ch.~16]{GaKi11}) such as amplify-and-forward \cite{LaTsWo04,KrGaGu05,GaMoZa06,FaTh07,RoGa09},
compress-and-forward \cite{CoGa79,KrGaGu05,NgFo11,SiMuViCo10}, 
decode-and-forward (DF) \cite{CoGa79,WaZhHo05,KrGaGu05,SiMuViCo09,NgFo11,GeUt11},
and partial decode-and-forward (PDF)
\cite[Section~16.6]{GaKi11},\cite[Section 9.4.1]{Kr07},\cite{GaAr82,KrGaGu05,LoViHe08,HeGeWeUt14,GeHeWeUt15,JiKi17}.
On the other hand, the so-called cut-set bound (CSB) can be used to obtain an upper bound to the capacity \cite{CoGa79}.

The Gaussian MIMO relay channel, i.e., a relay channel with Gaussian noise and multiple antennas at all terminals 
was considered, e.g., in \cite{WaZhHo05,FaTh07,LoViHe08,RoGa09,SiMuViCo09,SiMuViCo10,NgFo11,GeUt11,GeWeUt13,WeGeUt13,GeWeRiUt14,GeHeWeUt15,JiKi17}.
For such a system,
the PDF scheme is a particularly promising candidate since it was shown in \cite{JiKi17}
that this coding scheme achieves within a constant gap (depending only on the number of antennas at the source and the relay) of the unknown capacity.

The PDF scheme is an extension of DF and belongs to a wider class of relaying strategies proposed in \cite{CoGa79},
where the relay decodes only a part of the source message and applies compress-and-forward to the other part.
In particular, the second part is simply ignored (i.e., it is compressed to a constant value zero) in the PDF scheme \cite{GaAr82}.
This overcomes the issue that a weak source-relay link can become the limiting factor when using DF \cite{KrGaGu05},\cite[Section 9.4.1]{Kr07}.
However, the second part of the source signal, which acts as an interfering signal at the relay, makes the analysis and optimization much more involved.

Just like for DF, the achievable PDF rate is maximized by jointly circularly symmetric Gaussian inputs \cite{GeHeWeUt15},
but the optimal covariance matrices are still unknown.
In case of DF, these covariance matrices can be obtained by solving a convex program \cite{NgFo11,GeUt11},
but the additional interference term makes the rate maximization problem for PDF nonconvex.
Therefore, previous approaches to maximize the PDF rate based on successive convex approximation \cite{WeGeUt13} or zero-forcing \cite{GeWeUt13}
provide only suboptimal solutions.

In this work, we settle this problem in the following manner.
We adopt the primal decomposition approach from \cite{HeGeWeUt14}, where an innovation signal is introduced as an auxiliary signal.
As in the previous conference version \cite{WiHeUt16}, we then approximate the positive-semidefiniteness constraint of the innovation covariance matrix
by a stricter constraint that all eigenvalues have to be greater than or equal to a small positive constant $\varepsilon$,
i.e., the matrix becomes strictly positive-definite.
We then show that the outer problem of this approximate formulation is convex
and can be solved by the efficient cutting plane algorithm.
For the inner problem, we use a generalized eigenvalue decomposition to obtain a (quasi) closed-form solution,
whose derivation is based on results that were obtained in \cite{BuLiPoSh09} and \cite{HeGeUt18}
for the MIMO wiretap channel and the MIMO broadcast channel, respectively.

To overcome the issue from \cite{WiHeUt16} that the approximation using $\varepsilon$ does not allow a rigorous statement about the distance of the obtained solutions
from the true global optimum,
we propose a modified version of the cutting plane method.
For this modified algorithm, we cannot prove that it always converges,
but if it converges, it finds the globally optimal solution up to a desired precision.
Despite this lack of a theoretical convergence guarantee, we observed convergence in all scenarios we considered as numerical examples.
Thus, at least for these scenarios, we settle the conjecture from \cite{WiHeUt16} that the obtained solutions are indeed close-to-optimal.

After introducing the system model and the problem formulation in Section~\ref{sec:model},
we give the details of the proposed reformulation and approximation in Section~\ref{sec:outer}.
Therein, we also show that the approximation leads to a convex outer problem.
In Section~\ref{sec:inner}, we derive the (quasi) closed-form solution to the inner problem.
Based on this, an explicit expression for the gradient of the objective function of the outer problem is derived in Section~\ref{sec:grad}.
Using these ingredients, we summarize the algorithmic solution as well as the modified cutting plane method in Section~\ref{sec:algo}.
The numerical examples in Section~\ref{sec:results} include a comparison to an existing suboptimal approach.

\emph{Notation:}
We use $\id$ for the identity matrix of appropriate size, and $\zero$ for the zero matrix.
The order relations $\succeq$ and $\succ$ have to be understood in the sense of positive-semidefiniteness and positive-definiteness, respectively.
We write $\Expect{\bullet}$ for the expectation, and $\mb C_\mb x$ is the covariance matrix of a random vector $\mb x$.
We use (conditional) mutual information expressions of the form $\MI{\mb x}{\mb y}$ and $\MICond{\mb x}{\mb y}{\mb z}$.
For matrices $\mb A$ and $\mb B$,
we use the Frobenius inner product $\langle\mb A,\mb B\rangle=\tr[\mb B^\He\mb A]$ (e.g., \cite[Sec.~5.2]{HoJo13})
and the Frobenius norm $\|\mb A\|_\mt{F} = \sqrt{\langle\mb A,\mb A\rangle}$, where $\tr[\bullet]$ is the trace of a matrix.
The notation $[\mb A]_{i,j}$ is used for the element in the $i$th row and $j$th column of $\mb A$,
and $\mb A^+$ is the Moore-Penrose pseudoinverse of $\mb A$.

\section{System Model and Problem Formulation}
\label{sec:model}
We consider a Gaussian MIMO relay channel with $\NS$ antennas at the source $\nodeS$, $\NR$ antennas at the relay $\nodeRe$, and $\ND$ antennas at the destination $\nodeD$.
The transmission is described by
\begin{align}
\yR &= \HRS \xS + \nR \\
\yD &= \HDS \xS + \HDR \xR + \nD
\end{align}
where $\mb{H}_\mt{BA}\in\mathbb{C}^{N_\mt{B}\times N_\mt{A}}$ is the channel matrix from node $\mt{A}$ to node $\mt{B}$ with $\mt{A},\mt{B}\in\{\nodeS,\nodeR,\nodeD\}$.
The noise vectors $\nR$ and $\nD$ are assumed to be circularly symmetric Gaussian with zero mean and, without loss of generality, identity covariance matrices
$\CnR=\idNR$ and $\CnD=\idND$.
We assume perfect channel state information and full-duplex transmission with perfect self-interference cancellation at the relay.
An extension to other less idealized scenarios could be considered in future research.

In the PDF scheme (e.g., \cite[Section 16.6]{GaKi11},\cite[Section 9.4.1]{Kr07}) that we consider,
the transmit signal $\xS$ of the source is created as a superposition of two independent parts $\uS$ and $\vS$, where $\uS$ denotes the part that is sent in cooperation with the relay just like in the case of DF.
The second part $\vS$, which is specific to PDF, is directly transmitted without the help of the relay and causes interference at the relay.
We follow the approach from \cite{HeGeWeUt14}, where $\xS$ is further decomposed as
\begin{equation}
\label{eq:decompose_xS}
	\xS = \uS + \vS = \Amat \xR + \wS + \vS = \zS + \wS + \vS.
\end{equation}
where $\wS$ is independent of the relay transmit signal $\xR$ and $\zS$ is fully correlated with $\xR$.

The actual transmission is carried out using a block-Markov coding scheme \cite[Section 9.4.1]{Kr07}, in which causality has to be respected.
Consequently, $\xR$ and $\zS$ must be fully determined by data that has previously been received by the relay, i.e.,
$\zS$ cannot be used to provide any new information.
Instead, $\zS$ is used to serve the user jointly with the relay in a coherent manner.
On the other hand, the signal parts $\wS$ and $\vS$ contain new information that is provided to the relay (to allow for further coherent transmissions in the future) and
to the destination.
We therefore call $\wS+\vS$ the \emph{innovation} signal,
and we call its covariance matrix ${\Cvq = \CvS + \CwS}$ the \emph{innovation covariance matrix} \cite{HeGeWeUt14}.

Our aim is to maximize the achievable data rate of the PDF protocol
(e.g., \cite[Section 9.4.1]{Kr07})
\begin{mueq}
\label{eq:PDFrate}
R=\min\big\{\underbrace{\MICond{\xS}{\yD}{(\uS, \xR)} + \MICond{\uS}{\yR}{\xR}}_{\RA}, \tclf
\,\underbrace{\MI{(\xS,\xR)}{\yD}}_{\RB} \big\}
\end{mueq}
under the constraints
\begin{align}
	\ExpectL{ \xS^\He\xS } &\leq\PS, &
	\ExpectL{ \xR^\He\xR } &\leq\PR 
\end{align}
on the transmit powers of the source and of the relay.
We can use the result that jointly circularly symmetric Gaussian signals
are the optimal input signals for the PDF scheme in the Gaussian MIMO relay channel \cite{GeHeWeUt15},
and we can plug in the decomposition \eqref{eq:decompose_xS}.

The optimization problem we consider is then given by
\begin{IEEEeqnarray}{Rr}
	\label{eq:problem}
	\max_{\substack{\CvS \succeq \zero,\CwS \succeq \zero \IEEEnonumber\\
				\CzSxR \succeq \zero}}
	&\min \left\{ \RA(\CvS,\CwS), \RB(\CvS,\CwS,\CzSxR) \right\} \\
	~\st~	&\tr(\CvS + \CwS) + \tr(\DS \CzSxR	\DS^\He) \leq \PS \IEEEnonumber\\
											&\tr(\DR \CzSxR \DR^\He) \leq \PR
\end{IEEEeqnarray}
with
\begin{mueq}
\label{eq:Ra}
\RA(\CvS,\CwS)=\log_2 \det (\idND + \HDS \CvS \HDS^\He) \tclf
   +\log_2  \frac{\det (\idNR + \HRS (\CvS+\CwS) \HRS^\He)}{\det (\idNR + \HRS \CvS \HRS^\He)},
\end{mueq}%
\begin{mueq}
  \label{eq:Rb}
  \RB(\CvS,\CwS,\CzSxR)=\tclf
   \log_2 \det (\idND + \HDS (\CvS+\CwS) \HDS^\He + 
  \mb{H}
  \CzSxR
  \mb{H}^\He
  )
\end{mueq}
where we have used the joint channel matrix
\begin{IEEEeqnarray}{c}
\mb H = \begin{bmatrix}\HDS & \HDR\end{bmatrix}
\label{eq:defH}
\end{IEEEeqnarray}
the joint covariance matrix
\begin{IEEEeqnarray}{c}
	\CzSxR = \CzSxRLong  =
	\ExpectL{
		\begin{bmatrix} \zS \\ \xR \end{bmatrix}
		\begin{bmatrix} \zS \\ \xR \end{bmatrix}^\He
	}
\end{IEEEeqnarray}
and the selection matrices
\begin{IEEEeqnarray}{rtl}
	\DS = \begin{bmatrix}\idNS& \zero\end{bmatrix}&\:\:and\:\:&\DR = \begin{bmatrix}\zero& \idNR\end{bmatrix}.
\end{IEEEeqnarray}

In \eqref{eq:problem}, we have neglected an additional structural constraint on the joint covariance matrix $\CzSxR$ of $\zS$ and $\xR$
which would ensure that $\zS$ and $\xR$ are fully correlated as required by \eqref{eq:decompose_xS}.
To see that this constraint is automatically fulfilled in the optimum, assume that we have
\begin{equation}
\CzSxR =  
\underbrace{\ExpectL{
		\begin{bmatrix} \Amat\xR \\ \xR \end{bmatrix}
		\begin{bmatrix} \Amat\xR \\ \xR \end{bmatrix}^\He
	}}_{\CzSxR'}+\begin{bmatrix}
	\mb B & \zero \\ \zero & \zero
	\end{bmatrix}
\end{equation}
with 
$\mb B\succeq\zero$.
Then, we could replace $\CzSxR$ by $\CzSxR'$ and $\CwS$ by 
$\CwS'=\CwS+\mb B$.
This would leave the left side of the constraints as well as the rate $\RB$ unchanged
(to see this, note the definition of $\mb H$ in \eqref{eq:defH}),
and the rate $\RA$ would either increase or remain unchanged.
Thus, we do not need to incorporate this constraint in \eqref{eq:problem}.

\section{Approximation and Primal Decomposition}
\label{sec:outer}
As part of a proof in \cite{HeGeWeUt14}, a primal decomposition of \eqref{eq:problem}
into outer and inner problems
was performed by introducing the innovation covariance matrix $\Cvq = \CvS + \CwS$ as an auxiliary variable in the outer problem.
In order to exploit this approach for solving \eqref{eq:problem} numerically, we approximate the constraint set in a way that we obtain a convex optimization problem.

For $\varepsilon\geq0$, we introduce the convex set
\begin{mueq}
\label{eq:constr}
	\mathcal{P}_\varepsilon = \big\{ (\Cvq \succeq \varepsilon\idNS ,\CzSxR \succeq \zero) ~\big|~ \tr(\Cvq) + \tr(\DS \CzSxR \DS^\He) \leq \PS 
	~\tclf\text{\normalfont and} \: \tr(\DR \CzSxR \DR^\He) \leq \PR \big\}
\end{mueq}
which becomes equivalent to the convex constraint set of \eqref{eq:problem} if we plug in $\varepsilon=0$.
Otherwise, i.e., if $\varepsilon>0$, the set $\mathcal{P}_\varepsilon\subset\mathcal{P}_0$ is a slightly tightened version of the constraint set of \eqref{eq:problem}.

We consider the approximated optimization problem
\begin{equation}
\label{eq:primal}
	\max_{(\Cvq,\CzSxR) \in \mathcal{P}_\varepsilon}\!\!\!\! \max_{\substack{\CvS \succeq \zero\\ \CwS \succeq \zero \\ \CvS+\CwS\preceq\Cvq }}\!\!\!\!
	\min \left\{ \RA(\CvS,\CwS), \RB(\CvS,\CwS,\CzSxR) \right\}\!
\end{equation}
which is equivalent to \eqref{eq:problem} if $\varepsilon=0$.
To see why this is true, note that the shaping constraint in the inner minimization is active in the optimum \cite{HeGeWeUt14}.
For $\varepsilon>0$, it provides a feasible solution to \eqref{eq:problem},
and thus a lower bound to the global optimum of \eqref{eq:problem}.
For all numerical examples considered in Section~\ref{sec:results}, 
we obtain the guarantee that this lower bound is numerically tight, i.e., that the global optimum of \eqref{eq:problem} is obtained up to a small error tolerance.

As $\RB$ in \eqref{eq:Rb} depends only on the sum $\CvS+\CwS$,
it is no longer a function of $\CvS$ and $\CwS$ if $\Cvq$ is given.
Thus, the inner optimization over $\CvS$ and $\CwS$ is only needed for $\RA$,
and we can rewrite the problem as
\begin{IEEEeqnarray}{c}
\label{eq:decomposition}
			\begin{aligned}[b]
				\max_{(\Cvq,\CzSxR) \in \mathcal{P}_\varepsilon}~
				&\min \left\{
				\RA^\star(\Cvq), \RB(\Cvq, \CzSxR) \right\}
			\end{aligned}
\end{IEEEeqnarray}
with
\begin{equation}
	\label{eq:Rastar}
	\RA^\star(\Cvq) = 
	\max_{\CvS \succeq \zero, \CwS \succeq \zero} \RA(\CvS,\CwS)
	~~\st~~
	\CvS + \CwS \preceq \Cvq \quad
\end{equation}
and
\begin{equation}
\RB(\Cvq, \CzSxR)=
\log_2 \det (\idND + \HDS \Cvq \HDS^\He + 
  \mb{H}
  \CzSxR
  \mb{H}^\He
  ).
\end{equation}
Below, we show that problem~\eqref{eq:decomposition} is convex,
and in Section~\ref{sec:inner}, we show that \eqref{eq:Rastar} can be solved in (quasi) closed form.
 
\begin{remark}
If we in addition moved the optimization over $\CzSxR$ inside the minimum, noting that only $\RB$ needs to be optimized over $\CzSxR$,
we would obtain the primal decomposition proposed in \cite{HeGeWeUt14}.
For the algorithm proposed in Section~\ref{sec:algo}, we find it more convenient to keep the optimization over $\CzSxR$ together with the optimization over $\Cvq$.
\end{remark}

\begin{theorem}
\label{th:cvx}
The modified optimization problem \eqref{eq:decomposition} with $\varepsilon>0$ is a convex program.
\end{theorem}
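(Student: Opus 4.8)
The plan is to exhibit \eqref{eq:decomposition} as the maximization of a concave function over the convex set $\mathcal{P}_\varepsilon$. Convexity of $\mathcal{P}_\varepsilon$ in \eqref{eq:constr} is immediate, since it is the intersection of the two linear matrix inequalities $\Cvq\succeq\varepsilon\idNS$ and $\CzSxR\succeq\zero$ (each a translate/affine preimage of a positive-semidefinite cone) with the two affine trace constraints $\tr(\Cvq)+\tr(\DS\CzSxR\DS^\He)\leq\PS$ and $\tr(\DR\CzSxR\DR^\He)\leq\PR$. It then remains to show that the objective $(\Cvq,\CzSxR)\mapsto\min\{\RA^\star(\Cvq),\RB(\Cvq,\CzSxR)\}$ is concave, and since a pointwise minimum of concave functions is concave, it suffices to prove that $\RB$ and $\RA^\star$ are each concave; maximizing the resulting concave objective over the convex feasible set is then a convex program.

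For $\RB$ this is routine: the matrix $\idND+\HDS\Cvq\HDS^\He+\mb H\CzSxR\mb H^\He$ is an affine function of $(\Cvq,\CzSxR)$ taking positive-definite values (because of the $\idND$ summand), and $\mb X\mapsto\log_2\det\mb X$ is concave on the positive-definite cone, so $\RB$ is jointly concave in $(\Cvq,\CzSxR)$ as a concave function composed with an affine map. Since $\RA^\star$ does not involve $\CzSxR$, its concavity on $\mathcal{P}_\varepsilon$ reduces to concavity of $\Cvq\mapsto\RA^\star(\Cvq)$.

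The heart of the argument is thus to show that $\RA^\star$ from \eqref{eq:Rastar} is concave, and this is the step I expect to be the only real obstacle: $\RA(\CvS,\CwS)$ in \eqref{eq:Ra} is \emph{not} jointly concave in $(\CvS,\CwS)$ --- the term $-\log_2\det(\idNR+\HRS\CvS\HRS^\He)$ is precisely what makes \eqref{eq:problem} nonconvex --- so the convexification must be produced entirely by the inner maximization together with the shaping constraint. I would first note that for fixed $\CvS$ the objective $\RA(\CvS,\CwS)$ is nondecreasing in $\CwS$ in the Loewner order (only its middle log-det term depends on $\CwS$), so the shaping constraint is active at the optimum, cf.\ \cite{HeGeWeUt14}, and one may set $\CwS=\Cvq-\CvS$; since $\CvS\preceq\CvS+\CwS\preceq\Cvq$ holds automatically, the inner feasible set collapses to $\{\zero\preceq\CvS\preceq\Cvq\}$ and
\begin{multline*}
\RA^\star(\Cvq)=\log_2\det(\idNR+\HRS\Cvq\HRS^\He)\\
+\max_{\zero\preceq\CvS\preceq\Cvq}\big[\log_2\det(\idND+\HDS\CvS\HDS^\He)-\log_2\det(\idNR+\HRS\CvS\HRS^\He)\big].
\end{multline*}
The first summand is concave in $\Cvq$ (a concave function of an affine map). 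The remaining maximum is exactly the secrecy capacity of a Gaussian MIMO wiretap channel with legitimate receiver $\nodeD$ and eavesdropper $\nodeRe$ under the matrix covariance constraint $\CvS\preceq\Cvq$, and I would invoke --- or, alternatively, cite \cite{HeGeWeUt14} --- that this quantity is concave in the constraint matrix $\Cvq$. If a self-contained proof is wanted, the natural route is a channel-enhancement/degradedness argument: the constrained secrecy capacity equals the infimum, over all enhanced channels for which the relay observation is a stochastically degraded version of the destination observation, of the corresponding secrecy rate; for each such degraded channel that rate is concave in $\CvS$, hence each enhanced constrained capacity $\Cvq\mapsto\max_{\zero\preceq\CvS\preceq\Cvq}(\text{concave in }\CvS)$ is concave as a partial maximization of a concave function over the convex set $\{\zero\preceq\CvS\preceq\Cvq\}$, and an infimum of concave functions is concave. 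Adding the two concave terms gives concavity of $\RA^\star$, and assembling the pieces completes the proof.

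The restriction $\varepsilon>0$ plays only a mild technical role here: it keeps $\Cvq\succeq\varepsilon\idNS\succ\zero$, so the inner problem stays away from the boundary of the positive-semidefinite cone and all quantities are continuous; the same reasoning essentially covers $\varepsilon=0$ up to care at the boundary, so $\varepsilon>0$ is motivated by the subsequent closed-form solution of \eqref{eq:Rastar} and by the algorithm rather than being essential to convexity. A short separate check that the optimal value of \eqref{eq:decomposition} is finite on $\mathcal{P}_\varepsilon$ (clear from the power constraints) rounds off the argument.
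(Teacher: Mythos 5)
Your proof is correct, but it establishes the key step --- concavity of $\RA^\star$ --- by a genuinely different route than the paper. The paper identifies the inner problem \eqref{eq:Rastar} with a two-user broadcast-channel sum-rate maximization under a shaping constraint, invokes the zero-duality-gap result of \cite{DoRiUt16} for $\Cvq\succ\zero$, and then runs a sensitivity analysis on the Lagrangian dual: the optimal multiplier $\widetilde{\mb\Omega}$ at any $\Cvqtilde\succ\zero$ yields an affine majorant $\hatRA^\star(\Cvq;\Cvqtilde)$ of $\RA^\star$, which gives concavity. You instead eliminate $\CwS$ via the active shaping constraint, split $\RA^\star(\Cvq)$ into $\log_2\det(\idNR+\HRS\Cvq\HRS^\He)$ plus a wiretap secrecy capacity under the covariance constraint $\CvS\preceq\Cvq$, and prove the latter concave by writing it as an infimum, over degraded enhancements $\barGk\succeq\Gk$, $\barGk\succeq\Gj$, of functions that are concave in $\Cvq$ by partial maximization. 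Both arguments are sound, and both genuinely need $\Cvq\succ\zero$: the paper because the zero-duality-gap result fails for rank-deficient $\Cvq$, you because the existence of a tight enhancement (which is exactly what the paper's Theorem~\ref{th:closed}, following \cite{BuLiPoSh09}, supplies) is only established there --- so your closing remark that the reasoning ``essentially covers $\varepsilon=0$'' is too optimistic, cf.\ Remark~\ref{rem:rankdef:grad}. Two smaller points: you should cite or prove the lemma that $\log_2\det(\id+\barGk\CvS)-\log_2\det(\id+\Gj\CvS)$ is concave in $\CvS\succeq\zero$ when $\barGk\succeq\Gj$ (it follows, e.g., from $-\int_0^1\tr[(\CvS+\id+t\mb\Delta)^{-1}\mb\Delta]\,dt$ with $\mb\Delta\succeq\zero$ after aligning the channels, each integrand being concave by matrix convexity of the inverse); and note that the paper's duality route is not only a proof device but directly produces the subgradient $\widetilde{\mb\Omega}$ reused in \eqref{eq:Rahat} and the cutting-plane algorithm, whereas your route would require a separate argument to extract it.
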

\begin{IEEEproof}
The constraint set $\mathcal{P}_\varepsilon$ is convex,
and the objective function is a pointwise minimum,
which is concave if both terms inside the minimum are concave.
Since $\log \det (\mb{X})$ is concave in $\mb{X}\succeq\zero$ \cite[Section 3.1.5]{BoVa09}, the rate expression $\RB(\Cvq,\CzSxR)$ is jointly concave in $\Cvq$ and $\CzSxR$.
Thus, it is sufficient to show that $\RA^\star(\Cvq)$ is concave for $\varepsilon>0$, i.e., for $\Cvq \succ \zero$.

Consider the maximization of the sum rate in a $K$-user MIMO broadcast channel with dirty paper coding \cite{ViJiGo03,WeStSh06} under a shaping constraint
\begin{equation}
\max_{\substack{\mb Q_1 \succeq \zero,\dots,\mb Q_K \succeq \zero\\ \sum_k \mb Q_k\preceq\Cvq} } ~\sum_{k=1}^K \log_2 \frac{\det \left( \idNk + \sum_{i=k}^K \mb{H}_{k} \mb Q_i \mb{H}_{k}^\He\right)} {\det \left( \idNk + \sum_{i=k+1}^K \mb{H}_{k} \mb Q_i \mb{H}_{k}^\He\right)}
\label{eq:max-BC-equivalence}
\end{equation}
with channel matrices $\mb{H}_{k}$ and noise covariance matrices $\idNk$.
For given innovation covariance matrix $\Cvq$,
the maximization in \eqref{eq:Rastar} is mathematically equivalent to \eqref{eq:max-BC-equivalence} for $K=2$ users (cf.~\cite{HeGeWeUt14}).
Since \eqref{eq:max-BC-equivalence} was shown to have zero duality gap for $\Cvq\succ\zero$ \cite{DoRiUt16},
we can consider the Lagrangian dual problem of the maximization in \eqref{eq:Rastar}
and express $\RA^\star(\Cvq)$ as
\begin{mueq}
			\RA^\star(\Cvq)
			=\tclf\min_{\mb\Omega \succeq \zero} \,\, \max_{\CvS \succeq \zero, \CwS \succeq \zero}
				\RA(\CvS,\CwS) + \tr\left(\mb\Omega (\Cvq - (\CvS + \CwS))\right).
			\label{eq:Ra-dual}
\end{mueq}
Assuming some $\Cvqtilde \succ \zero$, this can be bounded from above by
\begin{align}
\RA^\star(\Cvq)\leq\max_{\substack{\CvS \succeq \zero\\ \CwS \succeq \zero}}
				\RA(\CvS,\CwS) &+ \tr \left(\mb{\widetilde{\Omega}} \left(\Cvq - \left(\CvS + \CwS\right)\right)\right) \quad
			\label{eq:sensitivity1}\\
			=\max_{\substack{\CvS \succeq \zero\\ \CwS \succeq \zero}}
						\RA(\CvS,\CwS) &+ \tr \left(\mb{\widetilde{\Omega}} \left(\Cvqtilde - \left(\CvS + \CwS\right)\right)\right) 
						\tcnn\tclf\tcand
						+ \tr\left(\widetilde{\mb\Omega}\left(\Cvq-\Cvqtilde\right)\right) \quad
			\label{eq:sensitivity2}
\end{align}
where \eqref{eq:sensitivity1} is valid for all $\widetilde{\mb\Omega} \succeq \zero$, and \eqref{eq:sensitivity2} is obtained by adding and substracting the term $\tr(\mb{\widetilde{\Omega}} \Cvqtilde)$.

Now consider $\RA^\star(\Cvqtilde)$ and let $\widetilde{\mb\Omega}$ be the optimal Lagrangian multiplier in \eqref{eq:Ra-dual} for this case.
Clearly, \eqref{eq:sensitivity1} also holds for this particular choice of $\widetilde{\mb\Omega}$.
Thus, \eqref{eq:sensitivity2} can be expressed as
\begin{IEEEeqnarray}{rCl}
\RA^\star(\Cvq) &\leq& \underbrace{\RA^\star(\Cvqtilde) + \left\langle \widetilde{\mb\Omega} , \Cvq-\Cvqtilde \right\rangle}_{\hatRA^\star(\Cvq; \Cvqtilde)}
\label{eq:Rahat}
\end{IEEEeqnarray}
showing that $\RA^\star(\Cvq)$ can be bounded from above by the linear approximation \phantom{\LARGE!\!\!}$\hatRA^\star(\Cvq; \Cvqtilde)$ around any $\Cvqtilde \succ \zero$.
Thus, $\RA^\star(\Cvq)$ is concave in $\Cvq$ for $\Cvq\succ\zero$, which concludes the proof.
\end{IEEEproof}

\begin{remark}
The above proof is based on the concept of a sensitivity analysis (cf.~\cite[Section 5.6]{BoVa09}).
\end{remark}
\begin{remark}
If $\varepsilon=0$, the matrix $\Cvq\succeq\varepsilon\idNS$ might have eigenvalues that are zero. In this case, the zero-duality-gap property from \cite{DoRiUt16} does not apply, i.e., \eqref{eq:max-BC-equivalence} might have a duality gap. Thus, we cannot extend Theorem~\ref{th:cvx} to $\varepsilon=0$.
For further implications of a rank-deficient $\Cvq$, see Remark~\ref{rem:rankdef:grad} below.
\end{remark}

\section{Solution to the Inner Problem}
\label{sec:inner}
In this section, we derive a (quasi) closed-form solution to the maximization in \eqref{eq:Rastar} with $\Cvq\succ\zero$.
As this problem is mathematically equivalent to a sum rate maximization \eqref{eq:max-BC-equivalence} in the two-user MIMO broadcast channel with a shaping constraint (see the proof of Theorem~\ref{th:cvx}),
we can follow the lines of \cite{HeGeUt18}, where we have derived a solution to the broadcast channel problem.
The proof relies on a channel enhancement argument that was previously used in \cite{BuLiPoSh09}
to derive a (quasi) closed-form expression of the secrecy capacity of the MIMO wiretap channel.
The concept of channel enhancement for the MIMO broadcast channel was originally proposed in \cite{WeStSh06}.

We use $\Gk=\HDS^\He \HDS$ and $\Gj=\HRS^\He\HRS$, so that we can rewrite \eqref{eq:Ra} as
\begin{mueq}
\label{eq:RaG}
\RA(\CvS,\CwS)=\log_2 \det (\idNS + \Gk \CvS ) \tclf
   +\log_2  \frac{\det (\idNS + \Gj (\CvS+\CwS))}{\det (\idNS + \Gj \CvS )}
\end{mueq}%
where we have used $\det(\id+\mb A\mb B)=\det(\id+\mb B\mb A)$.

\begin{theorem}
\label{th:closed}
Let $\gev$ and $\gevL=\diag\{\gevl_i\}$ such that
\begin{align}
\label{eq:GEVDid}
\gev^\He (\idNS + \CvqS^\frac{1}{2} \Gj \CvqS^\frac{1}{2}) \gev &= \idNS \\
\label{eq:GEVDLambda}
\gev^\He (\idNS + \CvqS^\frac{1}{2} \Gk \CvqS^\frac{1}{2}) \gev &= \gevL
\end{align}
where $\CvqS^\frac{1}{2}$ is the positive-semidefinite square root of $\CvqS\succ\zero$,
and let $\gev_1$ contain the columns of $\gev$ that correspond to the indices $\{i\,|\,\gevl_i>1\}$.
Moreover, let 
\begin{equation}
\gevLbar = \diag\{ \bar\gevl_i \} \quad\text{with}\quad \bar\gevl_i=\max\{\gevl_i;1\}.
\end{equation}
Then, the optimal solution to \eqref{eq:Rastar} is given by
\begin{equation}
\label{eq:closed}
\RA^\star(\CvqS)=\log_2  \det(\gevLbar) + \log_2\det(\idNS + \Gj \CvqS)
\end{equation}
and is attained by
\begin{align}
\label{eq:Qk}
\starCvS &= \CvqS^\frac{1}{2} \gev_1 \gev_1^+ \CvqS^\frac{1}{2}  \\
\label{eq:Qj}
\starCwS &= \CvqS - \starCvS.
\end{align}
\end{theorem}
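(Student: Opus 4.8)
The plan is to reduce the matrix maximization in \eqref{eq:Rastar} to a separable scalar form by a whitening change of variables together with the generalized eigenvalue decomposition \eqref{eq:GEVDid}--\eqref{eq:GEVDLambda}, and then to prove achievability by a direct substitution and the converse by a channel-enhancement argument in the spirit of \cite{BuLiPoSh09} and \cite{HeGeUt18}.

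Since $\CvqS\succ\zero$, I would set $\CvS=\CvqS^{1/2}\mb A\CvqS^{1/2}$ and $\CwS=\CvqS^{1/2}\mb B\CvqS^{1/2}$, turning the shaping constraint $\CvS+\CwS\preceq\CvqS$ into $\mb A\succeq\zero$, $\mb B\succeq\zero$, $\mb A+\mb B\preceq\id$ (the $K=2$ instance of \eqref{eq:max-BC-equivalence} after whitening, cf.\ the proof of Theorem~\ref{th:cvx}). With $\widetilde{\Gk}=\CvqS^{1/2}\Gk\CvqS^{1/2}$, $\widetilde{\Gj}=\CvqS^{1/2}\Gj\CvqS^{1/2}$ and the identity $\det(\id+\mb X\mb Y)=\det(\id+\mb Y\mb X)$, the three terms of $\RA$ in \eqref{eq:RaG} become $\log_2\det(\id+\widetilde{\Gk}\mb A)$, $\log_2\det(\id+\widetilde{\Gj}(\mb A+\mb B))$, and $-\log_2\det(\id+\widetilde{\Gj}\mb A)$. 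Only the second term depends on $\mb B$ and it is nondecreasing in $\mb B\succeq\zero$, so the shaping constraint is active at the optimum (cf.~\cite{HeGeWeUt14}); taking $\mb B=\id-\mb A$ turns this term into the constant $\log_2\det(\id+\widetilde{\Gj})=\log_2\det(\id+\Gj\CvqS)$, which is exactly the second summand of \eqref{eq:closed}. What remains is to show that $f(\mb A)=\log_2\det(\id+\widetilde{\Gk}\mb A)-\log_2\det(\id+\widetilde{\Gj}\mb A)$ has maximum $\log_2\det(\gevLbar)$ over $\zero\preceq\mb A\preceq\id$, attained at $\mb A^\star=\gev_1\gev_1^+$, which back-transforms to $\CvS=\starCvS$ in \eqref{eq:Qk} and $\CwS=\CvqS-\starCvS=\starCwS$ in \eqref{eq:Qj}.

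For achievability I would substitute $\mb A^\star=\gev_1\gev_1^+$, the orthogonal projector onto the column space of $\gev_1$, which is feasible ($\zero\preceq\mb A^\star\preceq\id$; note $\gev$, hence $\gev_1$, has full column rank by \eqref{eq:GEVDid}). Writing $\gev_1=\gev\mb E_1$ with $\mb E_1$ a column selection for $\mathcal I_1=\{i\,|\,\gevl_i>1\}$, and reading $\id+\widetilde{\Gj}=\gev^{-\He}\gev^{-1}$ and $\id+\widetilde{\Gk}=\gev^{-\He}\gevL\gev^{-1}$ off \eqref{eq:GEVDid}--\eqref{eq:GEVDLambda}, the identity $\det(\id+\mb X\mb Y)=\det(\id+\mb Y\mb X)$ reduces $\det(\id+\widetilde{\Gk}\mb A^\star)$ to $\det\bigl(([\gev^\He\gev]_{\mathcal I_1,\mathcal I_1})^{-1}[\gevL]_{\mathcal I_1,\mathcal I_1}\bigr)$ and $\det(\id+\widetilde{\Gj}\mb A^\star)$ to $\det\bigl(([\gev^\He\gev]_{\mathcal I_1,\mathcal I_1})^{-1}\bigr)$; the principal-submatrix determinants cancel in $f$, leaving $f(\mb A^\star)=\log_2\prod_{i\in\mathcal I_1}\gevl_i=\log_2\det(\gevLbar)$. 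Adding back the constant term yields \eqref{eq:closed} for this candidate.

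The converse, i.e.\ $f(\mb A)\le\log_2\det(\gevLbar)$ for all $\zero\preceq\mb A\preceq\id$, is the crux. Since $f$ is concave minus concave, hence in general nonconcave, the KKT conditions --- which one verifies $\mb A^\star$ satisfies, with multipliers $\mb\Psi\succeq\zero$, $\mb\Psi\mb A^\star=\zero$ (for $\mb A\succeq\zero$) and $\mb\Phi\succeq\zero$, $\mb\Phi(\id-\mb A^\star)=\zero$ (for $\mb A\preceq\id$), both expressible through the GEVD --- are necessary but not sufficient for global optimality. I would close the gap with the channel-enhancement technique of \cite{BuLiPoSh09}, \cite{WeStSh06}: from $\mb\Psi$ and $\mb\Phi$ one constructs an enhanced pencil $(\widetilde{\Gk}',\widetilde{\Gj}')$ with $\widetilde{\Gk}'\succeq\widetilde{\Gk}$ and $\widetilde{\Gj}'\preceq\widetilde{\Gj}$, so that $g(\mb A)=\log_2\det(\id+\widetilde{\Gk}'\mb A)-\log_2\det(\id+\widetilde{\Gj}'\mb A)\ge f(\mb A)$ on the feasible set, chosen \emph{aligned} ($\widetilde{\Gk}'$ and $\widetilde{\Gj}'$ simultaneously diagonalizable) so that the enhanced maximization is tractable, and tight at $\mb A^\star$ in the sense that $\mb A^\star$ still satisfies the enhanced stationarity condition and $g(\mb A^\star)=f(\mb A^\star)$. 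Solving the enhanced problem in closed form then gives $\max g=g(\mb A^\star)$, whence $\max f\le\max g=g(\mb A^\star)=f(\mb A^\star)=\log_2\det(\gevLbar)$; together with achievability this establishes the value \eqref{eq:closed} and the optimality of $\starCvS$ and $\starCwS$. I expect the construction and verification of the enhanced pencil --- making it simultaneously weak enough for the dominance $g\ge f$, aligned so the enhanced problem can be solved, and consistent with the KKT data at $\mb A^\star$, in particular treating the modes with $\gevl_i\le1$ (where $\mb A^\star$ vanishes) separately from those with $\gevl_i>1$ --- to be the main obstacle.
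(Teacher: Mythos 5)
Your outline coincides with the paper's proof in its two main steps---achievability by substituting the projector built from the generalized eigenvectors, and a converse by channel enhancement---and your achievability computation is correct and equivalent to the paper's Appendix~\ref{app:detLambda}. However, the converse is where the entire difficulty of the theorem sits, and there you stop short: you describe the \emph{shape} of an enhancement argument (construct an aligned pencil from KKT multipliers, verify dominance and tightness) and explicitly flag its construction as ``the main obstacle.'' Since the KKT conditions are not sufficient here (as you note, the objective is a difference of concave functions), the theorem is not proved until that enhanced matrix is actually exhibited and the enhanced problem is actually solved. That is a genuine gap, not a routine detail.

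The paper closes it with a construction that is considerably simpler than what you anticipate, and it is worth seeing why. First, only $\Gk$ is enhanced; $\Gj$ is left untouched and no Lagrange multipliers enter. Writing $\Gk = \CvqS^{-\frac{1}{2}}(\gev^{-\He}\gevL\gev^{-1}-\id)\CvqS^{-\frac{1}{2}}$ from \eqref{eq:GEVDLambda}, one simply clips the generalized eigenvalues at one, i.e., defines
\begin{equation*}
\barGk = \CvqS^{-\frac{1}{2}} \bigl(\gev^{-\He} \gevLbar \gev^{-1} - \id\bigr) \CvqS^{-\frac{1}{2}} ,
\end{equation*}
so that $\barGk\succeq\Gk$ follows from $\gevLbar\succeq\gevL$. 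Second, one never solves the enhanced optimization over your variable $\mb A$ at all: because $\gevLbar\succeq\id$, the same formula gives $\barGk\succeq\Gj$, i.e., the enhanced relay channel is reversely degraded, and the degradedness bound $\MICond{\xS}{\yD}{(\uS,\xR)}+\MICond{\uS}{\yR}{\xR}\leq\MICond{\xS}{\yD}{\xR}=\log_2\det(\id+\barGk(\CvS+\CwS))$ from \cite{GeWeRiUt14} collapses the difference of log-determinants into a \emph{single} log-determinant that is monotone in $\CvS+\CwS$. The maximum under the shaping constraint is then trivially $\log_2\det(\id+\barGk\Cvq)$, which a short determinant computation using \eqref{eq:GEVDid}--\eqref{eq:GEVDLambda} shows to equal $\log_2\bigl(\det(\gevLbar)\det(\id+\Gj\Cvq)\bigr)$, matching the achievable value. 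So the separate treatment of the modes with $\gevl_i\leq1$ that you expected to be delicate is handled entirely by the clipping $\gevl_i\mapsto\max\{\gevl_i,1\}$, and the ``alignment'' you were seeking is automatic because $\barGk$ and $\Gj$ share the generalized eigenbasis $\gev$ by construction.
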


\begin{proof}[Proof of Theorem~\ref{th:closed}]
As the shaping constraint is fulfilled with equality in the optimal solution \cite{HeGeWeUt14}, we directly have \eqref{eq:Qj}.
Since $\gev_1 \gev_1^+$ is a projection matrix, we have $\zero\preceq\gev_1 \gev_1^+\preceq\id$. This implies that $\zero\preceq\starCvS\preceq\CvqS$, so that the solution in \eqref{eq:Qk}--\eqref{eq:Qj}
is a feasible point for \eqref{eq:Rastar}.

\emph{Achievability:}
In Appendix~\ref{app:detLambda}, the following identities are shown:
\begin{align}
\label{eq:detLambda}
\det (\id + \Gk \starCvS) &=  \det(\gev_1^\He \gev_1)^{-1}\det(\gevLbar),
\\
\label{eq:detid}
\det (\id + \Gj \starCvS) &=  \det(\gev_1^\He \gev_1)^{-1}.
\end{align}
The solution in \eqref{eq:Qk}--\eqref{eq:Qj} thus achieves
\begin{align}
\RA &=\log_2  \frac{\det(\id + \Gk \starCvS)\det(\id + \Gj \CvqS)}{ \det(\id + \Gj \starCvS)}
\\ &=\log_2  \left(\det(\gevLbar)\det(\id + \Gj \CvqS)\right)
\end{align}
which shows that \eqref{eq:closed} is achievable.

\emph{Converse:}
Let 
\begin{equation}
\barGk = \CvqS^{-\frac{1}{2}} (\gev^{-\He} \gevLbar \gev^{-1}  - \id) \CvqS^{-\frac{1}{2}}.
\end{equation}
Since $\gevLbar\succeq\gevL$ and
$\Gk = \CvqS^{-\frac{1}{2}} (\gev^{-\He} \gevL \gev^{-1}  - \id) \CvqS^{-\frac{1}{2}}$,
we have $\barGk\succeq\Gk$.
Thus, replacing $\Gk$ by $\barGk$ leads to an \emph{enhanced} MIMO relay channel.
The solution to \eqref{eq:Rastar} in this enhanced scenario is an upper bound to the solution in the original scenario
since $\barGk\succeq\Gk$ implies 
$ \det(\id + \barGk \CvS)  \geq \det(\id + \Gk \CvS) $ for any $\CvS\succeq\zero$.

Moreover, due to $\gevLbar\succeq\id$, we have $\barGk \succeq \Gj$,
so that the enhanced scenario is a \emph{reversely stochastically degraded} MIMO relay channel (e.g., \cite{GeWeRiUt14}),
where it holds that 
\begin{align}
&\MICond{\xS}{\yD}{(\uS, \xR)} + \MICond{\uS}{\yR}{\xR} \\
\stackrel{\!\!\!\text{\cite{GeWeRiUt14}}\!\!\!}{\leq}~ &\MICond{\xS}{\yD}{\xR}
\\=~&
\log_2  \det (\idND + \barGk (\CvS+\CwS))
\\\leq~&
\log_2\det (\idND + \barGk \CvqS)
\label{eq:UB}
\\=~&
\log_2  \left(\det(\gevLbar)\det(\idNS + \Gj \CvqS)\right).
\end{align}
The last equality is due to 
\begin{align}
\det(\gevLbar) &=
\frac{\det(\gevLbar)}{\det\id}
=
\frac
{\det(\gev^\He (\id + \CvqS^\frac{1}{2} \barGk \CvqS^\frac{1}{2}) \gev)} 
{\det(\gev^\He (\id + \CvqS^\frac{1}{2} \Gj \CvqS^\frac{1}{2}) \gev)}
\\&=
\frac
{\det(\id + \CvqS^\frac{1}{2} \barGk \CvqS^\frac{1}{2})} 
{\det(\id + \CvqS^\frac{1}{2} \Gj \CvqS^\frac{1}{2})}
=\frac{\det(\id + \barGk \CvqS)}{\det(\id + \Gj \CvqS)}.
\end{align}
This shows that the solution to \eqref{eq:Rastar} in the original MIMO relay channel is bounded from above by \eqref{eq:closed}.
\end{proof}

Due to Theorem~\ref{th:closed}, we can compute a solution to \eqref{eq:Rastar} in (quasi) closed form
by solving a generalized eigenvalue problem (cf., e.g., \cite{GoLo96}).
In particular, the matrix $\gevL$ contains the generalized eigenvalues of the matrix pencil $(\gevA,\gevB)$
with $\gevA=\id + \CvqS^\frac{1}{2} \Gk \CvqS^\frac{1}{2}$ and
$\gevB=\id + \CvqS^\frac{1}{2} \Gj \CvqS^\frac{1}{2}$,
while $\gev$ contains the corresponding generalized eigenvectors.
This can be seen from
\begin{align}
\label{eq:LambdaFromGEVD1}
\left.\begin{aligned}\gev^\He\gevA\gev &\stackrel{\eqref{eq:GEVDLambda}}{=} \gevL\\
(\gev^\He\gevB\gev)^{-1} &\stackrel{\eqref{eq:GEVDid}}{=} \id
\end{aligned}\right\} &\Rightarrow~
(\gev^\He\gevB\gev)^{-1} \gev^\He\gevA\gev = \gevL\\
&\Leftrightarrow~ \gev^{-1}\gevB^{-1}\gev^{-\He} \gev^\He\gevA\gev = \gevL\\
&\Leftrightarrow~ \gevA\gev = \gevB\gev\gevL. \label{eq:LambdaFromGEVD3}
\end{align}

\begin{remark}
Since the solution to the generalized eigenvalue problem in \eqref{eq:LambdaFromGEVD3} is ambiguous with respect to a scaling of the columns of $\gev$,
the implication ``$\Rightarrow$'' in \eqref{eq:LambdaFromGEVD1} is not an equivalence ``$\Leftrightarrow$''.
However, choosing this scaling such that \eqref{eq:GEVDid}--\eqref{eq:GEVDLambda} hold is possible for any solution to \eqref{eq:LambdaFromGEVD3}.
\end{remark}

\section{Calculation of Gradients}
\label{sec:grad}
In this section, we derive linear approximations by means of gradients of $\RA^\star(\Cvq)$ and $\RB(\Cvq,\CzSxR)$,
which can then be used for an algorithmic solution to the outer problem.

For the rate expression $\RB(\Cvq,\CzSxR)$, a linear approximation around a point $(\Cvqtilde, \CzSxRtilde)$ can be easily obtained as
\begin{IEEEeqnarray}{rCl}
\label{eq:RbhatFirst}
			\RB(\Cvq,\CzSxR) &\leq& \hatRB(\Cvq,\CzSxR; \Cvqtilde, \CzSxRtilde) \\
				&=& \RB(\Cvqtilde,\CzSxRtilde)
				+ \left\langle \left. \tfrac{\partial \RB}{\partial \Cvq^\Tr}\right|_{\Cvqtilde}
					 , \, \Cvq - \Cvqtilde \right\rangle \tcnn\tclf\tcand\tcand
				\hfill + \left\langle \left. \tfrac{\partial \RB}{\partial \CzSxR^\Tr}\right|_{\CzSxRtilde} 
					 , \, \CzSxR - \CzSxRtilde
				\right\rangle
	\label{eq:Rbhat}
\end{IEEEeqnarray}
where the gradient matrices $\frac{\partial \RB}{\partial \Cvq^\Tr}$ and $\frac{\partial \RB}{\partial \CzSxR^\Tr}$ are given by
\begin{IEEEeqnarray}{rCl}
	\frac{\partial \RB}{\partial \Cvq^\Tr} &=& \frac{1}{\ln 2} \HDS^\He
		\mb{D}^{-1}
		\HDS \\
	\frac{\partial \RB}{\partial \CzSxR^\Tr} &=& \frac{1}{\ln 2} \mb{H}^\He
		\mb{D}^{-1}
		\mb{H}
\end{IEEEeqnarray}
with
\begin{IEEEeqnarray}{C}
\mb{D} = \idND + \HDS \Cvq \HDS^\He + \mb{H} \CzSxR \mb{H}^\He . 
\end{IEEEeqnarray}
The inequality in \eqref{eq:RbhatFirst} holds due to concavity of $\RB$ (see proof of Theorem~\ref{th:cvx}).

For $\RA^\star(\Cvq)$, we have already obtained a linear approximation in \eqref{eq:Rahat}.
Therein, $\widetilde{\mb\Omega}$ is a (concave) subgradient, and according to the sensitivity analysis in \eqref{eq:sensitivity1}--\eqref{eq:sensitivity2},
it can be computed numerically by finding the optimal dual variable (Lagrangian multiplier) $\widetilde{\mb\Omega}$ in \eqref{eq:Ra-dual}.
This approach was pursued in \cite{WiHeUt16}, where $\RA^\star(\Cvq)$ was evaluated by a numerical solver that delivers both the optimal primal and dual variables.
However, as we use the (quasi) closed-form solution \eqref{eq:closed} for the sake of a small computational complexity,
we instead calculate $\widetilde{\mb\Omega}$ by deriving the gradient $\widetilde{\mb\Omega}=\frac{\partial\RA^\star(\CvqS)}{\partial\CvqS^\Tr}$ directly based on \eqref{eq:closed}.

When taking the derivative of \eqref{eq:closed} with respect to $\CvqS$, 
the challenging part is the first summand in
\begin{mueq}
\label{eq:deriv_closed}
\widetilde{\mb\Omega}=\frac{\partial\RA^\star(\CvqS)}{\partial\CvqS^\Tr}
=\frac{\partial \log_2 \det(\gevLbar)}{\partial\CvqS^\Tr} \tclf+
\frac{1}{\ln2} \HRS^\He \left(\idNS + \HRS \CvqS \HRS^\He \right)^{-1} \HRS.
\end{mueq}
Assuming that the generalized eigenvalues in Theorem~\ref{th:closed} are sorted in descending order,
the first summand in \eqref{eq:closed} can be written as
\begin{equation}
\label{eq:closed:first}
\log_2\left(\prod_{i=1}^\NS \bar\gevl_i \right)
 = \log_2\left(\prod_{i=1}^\gevN \gevl_i \right)
 = \sum_{i=1}^\gevN \log_2 \gevl_i
\end{equation}
where $\gevN=|\{i\,|\,\gevl_i>1\}|$.
The missing derivative is thus given by
\begin{equation}
\label{eq:deriv_closed:first}
\frac{\partial \log_2 \det(\gevLbar)}{\partial\CvqS^\Tr} = 
\sum_{i=1}^\gevN \frac{1}{\gevl_i\,\ln2} \frac{\partial\gevl_i}{\partial\CvqS^\Tr}
\end{equation}
which we compute element-wise in the following.

To obtain $[\frac{\partial\gevl_i}{\partial\CvqS^\Tr}]_\ijInd=\frac{\partial\gevl_i}{\partial\CvqSijTr}$, we use the fact that the derivative of a generalized eigenvalue with respect to a parameter $\gevt$ is given by \cite{Ab09}
\begin{equation}
\label{eq:deriv_gevl}
\frac{\partial\gevl_i}{\partial \gevt} = \gevf_i^\He \left(\frac{\partial \gevA}{\gevt}-\gevl_i\frac{\partial\gevB}{\gevt}\right)\gevf_i
\end{equation}
where $\gevf_i$ is the generalized eigenvector corresponding to $\gevl_i$.
This result holds under the assumption that $\gevl_i$ has multiplicity one.
Since matrices with eigenvalues with multiplicities larger than one are a set of measure zero within the set of Hermitian matrices (e.g., \cite[Sec.~A.37]{Me04}),
the assumption is sensible if the channel matrices are drawn from a continuous distributions and $\Cvq$ has full rank.

In \eqref{eq:deriv_gevl}, we need the derivatives
\begin{align}
\label{eq:deriv_gevA}
\frac{\partial \gevA}{\partial \gevt} &= 
\frac{\partial \CvqS^\frac{1}{2}}{\partial \gevt} \HDS^\He \HDS \CvqS^\frac{1}{2} +  \CvqS^\frac{1}{2} \HDS^\He \HDS \frac{\partial \CvqS^\frac{1}{2}}{\partial \gevt}, \\
\label{eq:deriv_gevB}
\frac{\partial \gevB}{\partial \gevt} &= 
\frac{\partial \CvqS^\frac{1}{2}}{\partial \gevt} \HRS^\He \HRS \CvqS^\frac{1}{2} +  \CvqS^\frac{1}{2} \HRS^\He \HRS \frac{\partial \CvqS^\frac{1}{2}}{\partial \gevt}.
\end{align}
Setting 
$\gevt=\CvqSijTr$,
we can then obtain the derivative of $\gevl_i$ with respect to the entries of $\CvqS$.
It thus remains to calculate $\frac{\partial \CvqS^\frac{1}{2}}{\partial\CvqSji}$.

To this end, we use $\CvqS= \CvqS^\frac{1}{2}\CvqS^\frac{1}{2}$ to obtain the derivative
\begin{equation}
\label{eq:deriv_sqrtS}
\frac{\partial \CvqS}{\partial\CvqSji}
=
\frac{\partial \CvqS^\frac{1}{2}}{\partial\CvqSji} \CvqS^\frac{1}{2}
+
\CvqS^\frac{1}{2} \frac{\partial \CvqS^\frac{1}{2}}{\partial\CvqSji}.
\end{equation}
Noting that
\begin{align}
\left[\frac{\partial \CvqS}{\partial\CvqSji}\right]_\mnInd = \begin{cases}
1\quad &\text{if~}\jiIndEq, \\
0&\text{otherwise}
\end{cases}
\end{align}
we can calculate the unknown $\frac{\partial \CvqS^\frac{1}{2}}{\partial\CvqSji}$ by solving the linear system of equations \eqref{eq:deriv_sqrtS}.
As this system is a Sylvester equation \cite[Sec.~2.4.4]{HoJo13}, it
has a unique solution if $\CvqS^\frac{1}{2}$ and $-\CvqS^\frac{1}{2}$ have no common eigenvalues \cite[Th.~2.4.4.1]{HoJo13}.
This is fulfilled under the assumption $\CvqS\succ\zero$
since all eigenvalues of $\CvqS^\frac{1}{2}$ are strictly positive in this case.

The gradient $\widetilde{\mb\Omega}=\frac{\partial\RA^\star(\CvqS)}{\partial\CvqS^\Tr}$ can thus be calculated by combining
\eqref{eq:deriv_closed}, \eqref{eq:deriv_closed:first}, \eqref{eq:deriv_gevl}, \eqref{eq:deriv_gevA}, and \eqref{eq:deriv_gevB} with the solution to \eqref{eq:deriv_sqrtS}.

\begin{remark}
\label{rem:rankdef:grad}
In the calculation of the gradient, we need $\Cvq\succ\zero$ for arguing that there are no repeated eigenvalues as well as for solving the above Sylvester equation,
i.e., we again cannot easily extend the result to the case $\varepsilon=0$.
This is in line with the observation that the sensitivity analysis \eqref{eq:sensitivity1}--\eqref{eq:sensitivity2} does not extend to this case.
If $\Cvq$ has eigenvalues equal to zero, the corresponding eigenvectors can be interpreted as forbidden directions in which no signal power can be used.
This means that Slater's constraint qualifications (see, e.g., \cite[Ch.~5]{BaShSh06}, or \cite[Ch.~1]{BeNe13} for the corresponding concept in semidefinite programming)
are not fulfilled
since the constraint set of \eqref{eq:Rastar} has an empty interior in this case.
As a consequence, it might be the case that the KKT conditions (e.g., \cite[Sec.~4.2]{BaShSh06}) are no longer necessary for an optimal solution, i.e.,
it may happen that no combination of optimal primal and optimal dual variables exists, even though we can find an optimal primal solution via \eqref{eq:closed}.
Indeed, we verified numerically in \cite{WiHeUt16} that there are cases 
where the optimum of the original problem can indeed be attained at a point which does not fulfill the KKT conditions,
i.e., where we cannot find an optimal dual variable $\widetilde{\mb\Omega}$ to be used as a subgradient in the linearization \eqref{eq:Rahat}.
Thus, it is not surprising that the alternative approach of obtaining $\widetilde{\mb\Omega}$ via an explicit gradient calculation
fails as well if $\Cvq$ is rank-deficient.
\end{remark}

\section{Solution to the Outer Problem}
\label{sec:algo}
We have shown that the optimization problem \eqref{eq:decomposition} with $\varepsilon>0$ is convex (Theorem~\ref{th:cvx}), and we have calculated the gradients of both terms inside the minimum operator in the objective function (Section~\ref{sec:grad}).
For solving \eqref{eq:decomposition} with $\varepsilon>0$, we can thus choose from the wide range of derivative-based methods for convex programming.
In the following, we propose a solution using the cutting plane method \cite{Ke60,BaShSh06},
which successively refines linear approximations of a concave function.
Afterwards, we discuss how a solution to the original PDF rate maximization \eqref{eq:problem} can be obtained, i.e., for the case $\varepsilon=0$.

\subsection{Provably Convergent Algorithm for $\varepsilon>0$}
\label{sec:algo:cp}
To apply the cutting plane algorithm, we reformulate \eqref{eq:decomposition} as
\begin{IEEEeqnarray}{rCl}
\label{eq:cp1}
				\max_{\substack{(\Cvq,\CzSxR) \in \mathcal{P}_\varepsilon\\\cpU\in\mathbb{R}}} \cpU &\quad\st\quad & \cpU \leq \RA^\star(\Cvq), \: \cpU \leq \RB(\Cvq, \CzSxR).
\end{IEEEeqnarray}
The optimal value of this problem can be bounded from above by replacing $\RA^\star(\Cvq)$ and $\RB(\Cvq, \CzSxR)$ in the constraints of \eqref{eq:cp1} by their linear approximations \eqref{eq:Rahat} and \eqref{eq:Rbhat} around a finite number of points.
To this end, we consider the problem
\begin{align}
\label{eq:cp2}
	\begin{rcases}
		\begin{aligned}
		&\max_{\substack{(\Cvq,\CzSxR) \in \mathcal{P}_\varepsilon\\\cpU\in\mathbb{R}}} \cpU ~\st& \cpU &\leq \hatRA^\star(\Cvq; \Cvqtilde) \\[-0.3cm]
		&& \cpU &\leq \hatRB(\Cvq, \CzSxR; \Cvqtilde, \CzSxRtilde)
		 \end{aligned}
	\end{rcases}
	{\forall (\Cvqtilde, \CzSxRtilde) \in \mathcal{P}^{(\Ms)}}
\end{align}
where $\mathcal{P}^{(\Ms)} \subset \mathcal{P}_\varepsilon$ contains all points at which a linearization has been performed.
This corresponds to an outer approximation of the constraint set (i.e., a relaxed problem) since the linear approximations overestimate the concave functions $\RA^\star$ and $\RB$.
Problem \eqref{eq:cp2} is a semidefinite program and can be solved by standard solvers such as, e.g., SDPT3 \cite{ToTuTo06}.

The cutting plane algorithm is initialized with an initial set $\mathcal{P}^{(1)}$ with a small number of strictly feasible elements,
i.e., points $(\Cvq,\CzSxR)$ from the interior of the constraint set $\mathcal{P}_\varepsilon$.
For instance, we could use $\mathcal{P}^{(1)}=\{(\alpha\idNS,\beta\idJoint)\}$ with $\alpha>\varepsilon$ and $\beta>0$ chosen in a way that both power constraints in \eqref{eq:constr} are fulfilled with strict inequality.
In the $\Ms$th iteration, problem \eqref{eq:cp2} is solved, which yields an upper bound $\cpU^{(\Ms)}$ to the optimal value of \eqref{eq:decomposition}
and a new candidate point $(\Cvq^{(\Ms)},\CzSxR^{(\Ms)})$.
This point is added to the set of points in $\mathcal{P}^{(\Ms)}$ to obtain a new set $\mathcal{P}^{(\Ms+1)}$, and the gradients from Section~\ref{sec:grad} are calculated to obtain the additional linear approximations.
Moreover, by evaluating the objective function of \eqref{eq:decomposition} at the new candidate point, we can get a lower bound 
\begin{equation}
\cpL^{(\Ms)}=\min\{\RA^\star(\Cvq^{(\Ms)}),\RB(\Cvq^{(\Ms)},\CzSxR^{(\Ms)})\}
\end{equation}
to the optimal value.

The current best solution in the $\Ms$th iteration is the candidate point from iteration number $\argmax_{\ms\in\{1,\dots,\Ms\}} \cpL^{(\ms)}$,
and we get the guarantee that this solution lies at most
\begin{equation}
\cpDelta^{(\Ms)} = \cpU^{(\Ms)} - \max_{\ms\in\{1,\dots,\Ms\}} \cpL^{(\ms)}
\end{equation}
away from the global optimum of \eqref{eq:decomposition}.
This error $\cpDelta^{(\Ms)}$ converges towards zero due to the convergence proof of the cutting plane algorithm in \cite{Ke60}, and we can terminate the algorithm if a desired accuracy $\cpeps$ is reached.

The procedure is summarized in Algorithm~\ref{algo:cp}, and the intuition behind the convergence proof is as follows.
If the optimizer $(\Cvq^{(\Ms)},\CzSxR^{(\Ms)})$ of \eqref{eq:cp2} lies inside $\mathcal{P}_\varepsilon$, we have found an optimal solution to \eqref{eq:decomposition}
and obtain $\cpU{(\Ms)}=\cpL{(\Ms)}$.
In any other case, adding a new linear constraint due to a linearization at $(\Cvq^{(\Ms)},\CzSxR^{(\Ms)})$ 
refines the outer approximation.
This can be interpreted in a graphical manner as adding a tangent hyperplane at $(\Cvq^{(\Ms)},\CzSxR^{(\Ms)})$
in order to cut away a halfspace in which the optimal solution cannot lie.
Since the refined outer approximation leads to a decreased upper bound $\cpU^{(\Ms+1)}$,
the values of the upper bound form a decreasing sequence.
It can be shown (see \cite{Ke60}) that
this sequence converges towards the global optimal of \eqref{eq:decomposition}.

\begin{algorithm}[h]
\caption{Cutting Plane Method for~\eqref{eq:decomposition} with~$\varepsilon>0$}
\label{algo:cp}
Given $\Ms=0$, $\varepsilon>0$, and an initial set $\mathcal{P}^{(1)}$:
\begin{enumerate}\setlength{\itemsep}{6pt}
\item $\Ms\gets \Ms+1$
\label{item:cp_first}
\item $(\Cvq^{(\Ms)},\CzSxR^{(\Ms)},\cpU^{(\Ms)})$ $\gets$ optimizer of \eqref{eq:cp2}
\label{item:cp_sdp}
\item $\mathcal{P}^{(\Ms+1)} \gets \mathcal{P}^{(\Ms)} \cup \{(\Cvq^{(\Ms)},\CzSxR^{(\Ms)})\} $ \label{item:cp_newset}
\item Repeat \ref{item:cp_first})--\ref{item:cp_newset}) until
$\cpDelta^{(\Ms)}\leq\cpeps$
\end{enumerate}
\end{algorithm}

It remains to discuss the quality of the obtained solution in terms of the original PDF rate maximization.
As increasing $\varepsilon$ tightens the constraint set of \eqref{eq:decomposition}, the obtained solution is guaranteed to be feasible for \eqref{eq:decomposition}, but it might be suboptimal.
Unfortunately, it is not clear how strong this suboptimality can become.
In other words, it is an open question whether the optimal value of \eqref{eq:decomposition} is a continuous function of $\varepsilon$.
If this was not the case, it could in principle happen that an arbitrarily small $\varepsilon$ still leads to a large error in the optimal solution.
To overcome this issue, we propose a modified algorithm below.

\subsection{Algorithm for $\varepsilon=0$}
\label{sec:algo:mod}
When using the cutting plane algorithm,
the necessity of having $\varepsilon>0$ does not arise from the semidefinite program \eqref{eq:cp2},
but from the gradient calculation using \eqref{eq:deriv_sqrtS}.
We may thus allow $\varepsilon=0$ in \eqref{eq:cp2} if we have some other means to ensure
that $\mathcal{P}^{(\Ms)}\subset \mathcal{P}_\varepsilonprime$ for some $\varepsilonprime>0$,
i.e., to ensure that we need to evaluate the gradient for full-rank $\Cvq$.

Let $(\Cvq^{(\Ms)},\CzSxR^{(\Ms)})$ denote a candidate point obtained using $\varepsilon=0$.
As $\Cvq^{(\Ms)}$ might be rank-deficient, we propose to perform the orthogonal projection
\begin{equation}
\label{eq:proj}
\Cvq^\star=\argmin_{\Cvq\succeq\varepsilonprime \idNS} \| \Cvq - \Cvq^{(\Ms)} \|_\mt{F}^2
\end{equation}
and to add $(\Cvq^\star,\CzSxR^{(\Ms)})$ to the set $\mathcal{P}^{(\Ms+1)}$ instead of adding
$(\Cvq^{(\Ms)},\CzSxR^{(\Ms)})$.

The solution to this problem is given in the following proposition, whose proof is provided in Appendix~\ref{app:proj}.
\begin{prop}
\label{prop:proj}
The solution to \eqref{eq:proj} is given by
\begin{equation}
\Cvq^\star = \mb V \diag\{\max\{\kappa_i,\varepsilonprime\}\} \mb V^\He
\end{equation}
where $\mb V$ and $\kappa_i$ are obtained from the eigenvalue decomposition $\Cvq^{(\Ms)} = \mb V \diag\{\kappa_i\} \mb V^\He$.
\end{prop}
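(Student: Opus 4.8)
The plan is to solve the projection problem \eqref{eq:proj} by diagonalizing the Hermitian matrix $\Cvq^{(\Ms)}$ and reducing the matrix optimization to a separable scalar problem. First I would write $\Cvq^{(\Ms)} = \mb V \diag\{\kappa_i\} \mb V^\He$ with $\mb V$ unitary, and observe that the Frobenius norm is unitarily invariant, so that for any feasible $\Cvq\succeq\varepsilonprime\idNS$ we have $\|\Cvq - \Cvq^{(\Ms)}\|_\mt{F}^2 = \|\mb V^\He \Cvq \mb V - \diag\{\kappa_i\}\|_\mt{F}^2$. Introducing $\mb M = \mb V^\He \Cvq \mb V$, the constraint $\Cvq\succeq\varepsilonprime\idNS$ is equivalent to $\mb M\succeq\varepsilonprime\idNS$ (congruence by a unitary matrix preserves the partial order), so the problem becomes $\min_{\mb M\succeq\varepsilonprime\idNS} \|\mb M - \diag\{\kappa_i\}\|_\mt{F}^2$.

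The second step is to show that the optimal $\mb M$ is diagonal. Expanding the squared Frobenius norm, the off-diagonal entries of $\mb M$ contribute $\sum_{i\neq j} |[\mb M]_{i,j}|^2$ to the objective and do not appear in the constraint's diagonal-dominance-free characterization in any way that forces them to be nonzero; more precisely, if $\mb M\succeq\varepsilonprime\idNS$, then the matrix $\diag\{[\mb M]_{i,i}\}$ obtained by zeroing the off-diagonal entries also satisfies $\diag\{[\mb M]_{i,i}\}\succeq\varepsilonprime\idNS$ (its eigenvalues are exactly the diagonal entries, and each diagonal entry $[\mb M]_{i,i} = \mb e_i^\He \mb M \mb e_i \geq \varepsilonprime$), while strictly decreasing the objective unless $\mb M$ was already diagonal. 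Hence the optimum is attained at a diagonal $\mb M = \diag\{\mu_i\}$, and the problem decouples into the independent scalar problems $\min_{\mu_i\geq\varepsilonprime} (\mu_i - \kappa_i)^2$, whose solutions are $\mu_i = \max\{\kappa_i,\varepsilonprime\}$. Transforming back via $\Cvq^\star = \mb V \mb M \mb V^\He$ gives the claimed formula.

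The only mild subtlety — and the part I would be most careful about — is the argument that off-diagonal entries can be set to zero without violating feasibility; one must invoke that for a positive-semidefinite-constrained matrix the diagonal part dominates in the sense above, i.e. $\mb M\succeq\varepsilonprime\idNS \Rightarrow [\mb M]_{i,i}\geq\varepsilonprime$ for all $i$, which is immediate from testing with the standard basis vectors but is the one place the constraint genuinely enters. Everything else is routine: unitary invariance of $\|\cdot\|_\mt{F}$, separability of the diagonal problem, and the elementary scalar projection onto $[\varepsilonprime,\infty)$. I do not expect any real obstacle here; the proof is short and the statement is essentially the standard ``projection onto the positive-definite cone shifted by $\varepsilonprime\idNS$'' result.
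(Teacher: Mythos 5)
Your proof is correct, and it takes a genuinely different route from the paper. The paper solves \eqref{eq:proj} through its KKT conditions: it introduces a multiplier matrix $\Lam\succeq\zero$, invokes complementary slackness to argue (citing an external result) that $\Lam$ and $\Cvq$ share an eigenbasis, reads off $\Cvq-\Lam=\Cvq^{(\Ms)}$ from stationarity, and then does a case analysis on $\kappa_i\gtrless\varepsilonprime$. You instead argue directly: unitary invariance of $\|\cdot\|_\mt{F}$ reduces the problem to $\min_{\mb M\succeq\varepsilonprime\idNS}\|\mb M-\diag\{\kappa_i\}\|_\mt{F}^2$, and the key observation that zeroing the off-diagonal entries of a feasible $\mb M$ keeps it feasible (since $[\mb M]_{i,i}=\mb e_i^\He\mb M\mb e_i\geq\varepsilonprime$ and a diagonal matrix's eigenvalues are its diagonal entries) while only decreasing the objective reduces everything to $\gevN$ independent scalar projections onto $[\varepsilonprime,\infty)$. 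Both arguments are sound; yours is more elementary and self-contained, needing neither the Lagrangian machinery nor the cited common-eigenbasis lemma, while the paper's KKT route is more mechanical and would extend to variants where the ``drop the off-diagonals'' trick is not available. One cosmetic addition worth making: the objective is strictly convex, so the minimizer is unique, which justifies calling your construction \emph{the} solution rather than \emph{a} solution.
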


\begin{algorithm}[h]
\caption{Modified Algorithm for~\eqref{eq:decomposition} with~$\varepsilon=0$}
\label{algo:cpmod}
Given $\Ms=0$, $\varepsilonprime>0$ and an initial set $\mathcal{P}^{(1)}$:
\begin{enumerate}\setlength{\itemsep}{6pt}
\item $\Ms\gets \Ms+1$
\label{item:cpmod_first}
\item  $(\Cvq^{(\Ms)},\CzSxR^{(\Ms)},\cpU^{(\Ms)})$ $\gets$ optimizer of \eqref{eq:cp2} with $\varepsilon=0$ 
\label{item:cpmod_sdp}
\item $\Cvq^\star \gets$ projection of $\Cvq^{(\Ms)}$ onto $\mathcal{P}_\varepsilonprime$
\item $\mathcal{P}^{(\Ms+1)} \gets \mathcal{P}^{(\Ms)} \cup \{(\Cvq^\star,\CzSxR^{(\Ms)})\} $ \label{item:cpmod_newset}
\item Repeat \ref{item:cp_first})--\ref{item:cpmod_newset}) until
$\cpDelta^{(\Ms)}\leq\cpeps$
or $\mathcal{P}^{(\Ms+1)}=\mathcal{P}^{(\Ms)}$
\end{enumerate}
\end{algorithm}

When using the modified procedure in Algorithm~\ref{algo:cpmod},
problem \eqref{eq:cp2} with $\varepsilon=0$ directly provides an outer approximation of the original PDF rate maximization \eqref{eq:problem}.
This in contrast to Algorithm~\ref{algo:cp},
where \eqref{eq:cp2} provides an outer approximation of \eqref{eq:decomposition} which in turn is an inner approximation of \eqref{eq:problem}.
Therefore, no rigorous solution about the quality of the obtained solution is possible in case of Algorithm~\ref{algo:cp},
but the situation changes when using Algorithm~\ref{algo:cpmod}.

In this case, $\cpU^{(\Ms)}$ is an upper bound to the solution of \eqref{eq:problem}
and $\cpL^{(\Ms)}$ is a feasible point for \eqref{eq:problem}.
Thus, if $\cpDelta^{(\Ms)}\to0$, we have a guarantee that we have found a globally optimal solution to \eqref{eq:problem}.
However, the modification in Algorithm~\ref{algo:cpmod} comes at the cost that the convergence proof from \cite{Ke60} no longer applies.
If we have not yet found the optimum of the original problem, 
it is clear that an additional linear constraint due to a linearization at $(\Cvq^{(\Ms)},\CzSxR^{(\Ms)})$
refines the outer approximation (see Section~\ref{sec:algo:cp}),
but it is not clear whether a new linearization at $(\Cvq^\star,\CzSxR^{(\Ms)})$ does the same.
In fact, it might happen that the projections leads to a point $(\Cvq^\star,\CzSxR^{(\Ms)})$
that had already been contained in $\mathcal{P}^{(\Ms)}$ without being an optimal solution.
In this case, $\mathcal{P}^{(\Ms+1)}=\mathcal{P}^{(\Ms)}$, i.e., we would solve \eqref{eq:cp2} for the same outer approximation again in the next iteration
and would not get any further progress for $\cpU^{(\Ms+1)}$.

For this reason, the additional termination criterion $\mathcal{P}^{(\Ms+1)}=\mathcal{P}^{(\Ms)}$ was added to Algorithm~\ref{algo:cpmod}.
If the algorithm terminates due to this criterion, we have not managed to obtain a globally optimal solution to \eqref{eq:problem},
but the current $\cpDelta^{(\Ms)}$ gives a rigorous information about how far the distance to the globally optimal solution can be at most.
On the other hand, if the algorithm terminates due to the first criterion $\cpDelta^{(\Ms)}\leq\cpeps$,
we have a guarantee that the obtained solution is a globally optimal solution to the original PDF rate maximization \eqref{eq:problem}
up to the desired error tolerance $\cpeps$.
In particular, the latter will happen in cases where the optimal solution of \eqref{eq:problem} lies in $\mathcal{P}_\varepsilonprime$,
i.e., in cases where the optimal innovation covariance matrix has full rank.

\section{Results and Conclusion}
\label{sec:results}
For the numerical simulations, we adopt the line network example from \cite{WeGeUt13}, where the relay lies on a line between source and destination.
The distances source-relay, relay-destination, and source-destination are given by $\dRS = d, d \in (0,1)$, $\dDR = 1-d$, and $\dDS=1$, respectively. The channel matrices are given by $\mb{H}_\mt{AB} = d_\mt{AB}^{-\gamma/2} \mb{\tilde H}_\mt{AB}$ with $\gamma=4$
and $\mt{A},\mt{B}\in\{\nodeS,\nodeR,\nodeD\}$. The individual elements of each $\mb{\tilde H}_\mt{AB}$ are independent and circularly symmetric complex Gaussian distributed with zero mean and unit variance.

The first important observation of the simulations we performed is that Algorithm~\ref{algo:cpmod} always terminated with $\cpDelta^{(\Ms)}\leq\cpeps$
for $\cpeps=10^{-3}$.
This means that the proposed method manages to find the globally optimal PDF rate in the considered scenario,
i.e., the lack of a theoretical convergence guarantee does not hurt in this scenario.
Moreover, the difference between the PDF rates obtained with Algorithm~\ref{algo:cp} and Algorithm~\ref{algo:cpmod}
has always been much smaller than $\cpeps$ in the simulations,
i.e., 
Algorithm~\ref{algo:cp} manages to find the global optimum as well,
even though it has only been designed for computing a lower bound to the optimum.
As the numerical results do not differ, we do not present separate plots for the two algorithms below.

We can use the obtained globally optimal solutions to investigate two questions.
First, we can evaluate the performance of previously proposed suboptimal algorithms,
and second, we can study the gap between the PDF rate and the CSB.
As a suboptimal method, we have chosen the inner approximation approach (IAA) from \cite{WeGeUt13},
which was reported to outperform other suboptimal heuristics (see the simulations in \cite{WeGeUt13}).

\begin{figure}[!t]
	\centering
	\newlength\figureheight 
	\newlength\figurewidth 
	\setlength\figureheight{3.7cm} 
	\setlength\figurewidth{7.6cm}
	\definecolor{mycolor1}{rgb}{0.00000,0.44700,0.74100}%
\definecolor{gray}{RGB}{150,150,150}
\definecolor{gridgray}{RGB}{220,220,220}
\definecolor{tublue1}{RGB}{0, 153, 255}
\definecolor{tublue2}{RGB}{65, 190, 255}
\definecolor{tugreen1}{RGB}{145,172,107}
\definecolor{tured1}{RGB}{202,033,063}
\definecolor{tured2}{RGB}{229,052,024}
\definecolor{tuyellow}{RGB}{255,180,000}
\definecolor{tuorange}{RGB}{255,128,000}
\definecolor{tugreen2}{RGB}{181,202,130}
\begin{tikzpicture}

\begin{axis}[%
width=0.951\figurewidth,
height=\figureheight,
at={(0\figurewidth,0\figureheight)},
scale only axis,
xmin=-0.005,
xmax=0.099,
scaled ticks=false,
tick label style={/pgf/number format/fixed},
xlabel={$R_\text{proposed} - R_\text{IAA}$ [bpcu]},
ymin=0,
ymax=79,
ylabel={\hspace*{-0.4cm}absolute freq. of occurrences},
axis background/.style={fill=white},
xtick = {0,0.02,...,0.1},
minor xtick = {0.01,0.03,...,0.11},
grid style={thin,gridgray},
grid=both
]
\addplot[fill=gray,fill opacity=0.6,draw=black,ybar interval,area legend] plot table[row sep=crcr] {%
x	y\\
-0.0024	0\\
-0.001591	0\\
-0.000782	71\\
2.69999999999997e-05	23\\
0.000836	2\\
0.001645	0\\
0.002454	0\\
0.003263	0\\
0.004072	0\\
0.004881	0\\
0.00569	0\\
0.006499	0\\
0.007308	0\\
0.008117	0\\
0.008926	0\\
0.009735	0\\
0.010544	0\\
0.011353	0\\
0.012162	0\\
0.012971	0\\
0.01378	2\\
0.014589	1\\
0.015398	1\\
0.016207	0\\
0.017016	1\\
0.017825	2\\
0.018634	1\\
0.019443	0\\
0.020252	0\\
0.021061	0\\
0.02187	0\\
0.022679	1\\
0.023488	0\\
0.024297	0\\
0.025106	0\\
0.025915	0\\
0.026724	0\\
0.027533	0\\
0.028342	0\\
0.029151	0\\
0.02996	0\\
0.030769	0\\
0.031578	0\\
0.032387	0\\
0.033196	0\\
0.034005	0\\
0.034814	3\\
0.035623	0\\
0.036432	0\\
0.037241	0\\
0.03805	2\\
0.038859	0\\
0.039668	0\\
0.040477	1\\
0.041286	0\\
0.042095	1\\
0.042904	3\\
0.043713	3\\
0.044522	0\\
0.045331	0\\
0.04614	4\\
0.046949	1\\
0.047758	0\\
0.048567	2\\
0.049376	2\\
0.050185	0\\
0.050994	0\\
0.051803	1\\
0.052612	1\\
0.053421	1\\
0.05423	2\\
0.055039	2\\
0.055848	1\\
0.056657	2\\
0.057466	3\\
0.058275	2\\
0.059084	1\\
0.059893	2\\
0.060702	4\\
0.061511	4\\
0.06232	2\\
0.063129	0\\
0.063938	2\\
0.064747	4\\
0.065556	2\\
0.066365	2\\
0.067174	2\\
0.067983	2\\
0.068792	4\\
0.069601	1\\
0.07041	2\\
0.071219	5\\
0.072028	2\\
0.072837	4\\
0.073646	5\\
0.074455	1\\
0.075264	3\\
0.076073	0\\
0.076882	1\\
0.077691	2\\
0.0785	2\\
0.079309	0\\
0.089		1\\
0.089809		1\\
};
\end{axis}
\end{tikzpicture}%
	\caption{Histogram of rate gain over IAA \cite{WeGeUt13} for $\NS = \NR = \ND = 2$, $\PS = 100$, $\PR = 10$, $d=0.8$ and $\varepsilon = 10^{-5} \PS$.}
	\label{fig:hist}
\end{figure}
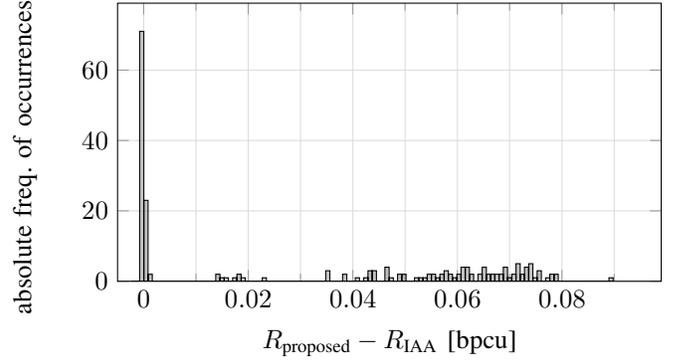
\begin{figure}[!t]
	\centering
	\setlength\figureheight{4.3cm} 
	\setlength\figurewidth{7.6cm}
	\definecolor{mycolor1}{rgb}{0.00000,1.00000,1.00000}%
\definecolor{tublue1}{RGB}{0, 153, 255}
\definecolor{tublue2}{RGB}{65, 190, 255}
\definecolor{tugreen1}{RGB}{145,172,107}
\definecolor{tured1}{RGB}{202,033,063}
\definecolor{tured2}{RGB}{229,052,024}
\definecolor{gridgray}{RGB}{220,220,220}
\definecolor{tuyellow}{RGB}{255,180,000}
\definecolor{tuorange}{RGB}{255,128,000}
\definecolor{tugreen2}{RGB}{181,202,130}
\begin{tikzpicture}

\begin{axis}[%
width=0.951\figurewidth,
height=\figureheight,
at={(0\figurewidth,0\figureheight)},
scale only axis,
xmin=0.1,
xmax=0.89,
xlabel={d},
ymin=14.2,
ymax=17.8,
ylabel={Rate [bpcu]},
axis background/.style={fill=white},
title style={font=\bfseries},
legend style={at={(0.01,0.99)}, anchor=north west,legend cell align=left,align=left,draw=white!15!black},
xtick = {0.2,0.4,...,0.8},
minor xtick = {0.3,0.5,0.7},
grid style={thin,gridgray},
grid=both
]
\addplot [color=black,thick, dotted]
  table[row sep=crcr]{%
0.1	14.2388403270632\\
0.2	14.7187837028188\\
0.3	15.3110129031721\\
0.4	16.0407105817475\\
0.5	16.8841226786877\\
0.6	17.4585061541446\\
0.7	16.9259419097039\\
0.8	15.9473861845729\\
0.9	15.1613946862957\\
};
\addlegendentry{cut-set bound};

\addplot [color=black,solid]
  table[row sep=crcr]{%
0.1	14.2276807237289\\
0.2	14.7176570462064\\
0.3	15.3083688476728\\
0.4	16.0251423838695\\
0.5	16.8016047616144\\
0.6	17.0835140824112\\
0.7	16.2822026044328\\
0.8	15.2793519036094\\
0.9	14.4913517381071\\
};
\addlegendentry{IAA DF};

\addplot [color=black,thick, dashed]
  table[row sep=crcr]{%
0.1	14.2422870850173\\
0.2	14.7179320365145\\
0.3	15.3081528497641\\
0.4	16.025750773699\\
0.5	16.8070472616593\\
0.6	17.1106658678561\\
0.7	16.3189145588717\\
0.8	15.3166627134353\\
0.9	14.5256311718428\\
};
\addlegendentry{proposed};

\end{axis}
\end{tikzpicture}%
	\caption{Average rate compared to IAA \cite{WeGeUt13} and to the CSB for $\NS = \NR = \ND = 2$, $\PS = 100$, $\PR = 10$, and $\varepsilon = 10^{-5} \PS$.\vspace{-1.5mm}}
	\label{fig:comparison}
\end{figure}
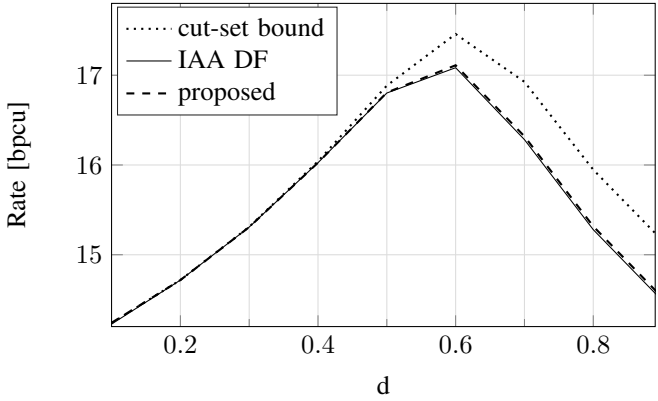

The histogram in Fig.~\ref{fig:hist} shows the difference $R_\mt{proposed}-R_\mt{IAA}$ for 200 i.i.d.\ channel realizations 
with two antennas at each terminal and distance parameter $d = 0.8$.
It can be seen that the IAA and the proposed algorithm converge to the same value in many cases.
However, there are also cases in which the proposed algorithm achieves a higher rate,
meaning that the solution found by the IAA method is not the global optimum in these cases.
Fig.~\ref{fig:comparison} shows the results for the same scenario with various values of $d$.
By using the proposed method as a benchmark, we can
conclude that the IAA has a close-to-optimal performance on average,
which had not been clear in the first place since the IAA is only a local method.
However, the IAA cannot find the global optimal for all channel realizations.

Concerning the gap to the CSB,
we can observe that this gaps remains for $d$ larger than approximately $0.5$ (see Fig.~\ref{fig:comparison})
even though we have solved the PDF rate maximization in a globally optimal manner.
This shows that this gap is not due to a potentially suboptimal choice of the covariance matrices, but it is either
inherent to the PDF scheme or inherent to the fact that the CSB might not be a tight bound to the capacity of the relay channel in general.

\section{Conclusion}
We have proposed a method to compute a solution to the problem of maximizing the partial decode-and-forward (PDF) rate
in the Gaussian MIMO relay channel.
In addition to the computed covariance matrices, the algorithm outputs an accuracy $\cpDelta^{(\Ms)}$
and guarantees that the achieved PDF rate is at most $\cpDelta^{(\Ms)}$ away from the true global optimum.
Even though the algorithm might theoretically terminate with $\cpDelta^{(\Ms)}\gg0$,
we have observed convergence $\cpDelta^{(\Ms)}\to0$ for all considered channel realizations of the considered numerical example.
This means, that the method indeed finds the globally optimal PDF rate in these scenarios.

An open topic for future research is to either find a formal convergence proof of the proposed method
or to derive an alternative solution approach for which a theoretical convergence guarantee can be given.
Another aspect is that it might be possible to calculate a Hessian matrix in addition to the gradient
and to use this second-order information to accelerate the numerical solution.

\appendices

\section{Derivation of \eqref{eq:detLambda}--\eqref{eq:detid}}
\label{app:detLambda}
Without loss of generality, assume that $\gevL$ in \eqref{eq:GEVDLambda} is arranged such that 
$\gevL=\blockdiag\{\gevL_1,\gevL_2\}$, where $\gevL_1$ contains all diagonal elements that are larger than one.
We can then write $\gev=\begin{bmatrix}\gev_1 & \gev_2\end{bmatrix}$, and we have
\begin{equation}
\gev_1\gev_1^+ = \gev \gevD \gev^\He \quad\text{with}\quad \gevD = \begin{bmatrix} (\gev_1^\He \gev_1)^{-1} & \zero \\ \zero & \zero \end{bmatrix}.
\end{equation}
Moreover, we can calculate
\begin{equation}
\gevD \gev^\He \gev  = \begin{bmatrix}
\id & (\gev_1^\He \gev_1)^{-1} \gev_1^\He \gev_2 \\ \zero & \zero
\end{bmatrix}
\end{equation}
and we obtain
\begin{align}
&\det (\id + \Gk \starCvS)\\&= \det(\id + \CvqS^\frac{1}{2} \gev_1 \gev_1^+ \CvqS^\frac{1}{2}
\CvqS^{-\frac{1}{2}} (\gev^{-\He} \gevL \gev^{-1} - \id) \CvqS^{-\frac{1}{2}}) \label{eq:det_proof_Lambda1}
\\&= \det(\id + \gev\gevD\gev^\He   (\gev^{-\He} \gevL \gev^{-1} - \id) )
\\&= \det(\id + \gevD\gev^\He   (\gev^{-\He} \gevL \gev^{-1} - \id)\gev )
\\&= \det(\id + \gevD \gevL - \gevD\gev^\He \gev ) \label{eq:det_proof_Lambda2}
\\&=\det\left(\begin{bmatrix} (\gev_1^\He \gev_1)^{-1} \gevL_1   & -(\gev_1^\He \gev_1)^{-1} \gev_1^\He \gev_2 \\ \zero & \id \end{bmatrix}\right)  \label{eq:det_proof_Lambda3}
\\&=\det((\gev_1^\He \gev_1)^{-1})\det(\gevL_1)  = \frac{\det(\gevLbar)}{\det(\gev_1^\He \gev_1)}.  \label{eq:det_proof_Lambda4}
\end{align}
To obtain the identity involving $\Gj$, we only need to replace $\gevL$ in \eqref{eq:det_proof_Lambda1}--\eqref{eq:det_proof_Lambda2} by $\id$,
yielding $\id$ instead of $\gevL_1$ in \eqref{eq:det_proof_Lambda3}--\eqref{eq:det_proof_Lambda4}.

\section{Proof of Proposition~\ref{prop:proj}}
\label{app:proj}
We solve the convex program \eqref{eq:proj} via its KKT conditions (e.g., \cite[Sec.~4.2]{BaShSh06}).
Introducing a Lagrangian multiplier matrix $\Lam\succeq\zero$,
the Lagrangian function of problem \eqref{eq:proj} read as
\begin{equation}
\Phi = \tr[ (\Cvq - \Cvq^{(\Ms)})^\He(\Cvq - \Cvq^{(\Ms)})] - \Lam (\Cvq-\varepsilonprime \idNS).
\end{equation}
The complementary slackness condition ${\Lam(\Cvq-\varepsilonprime \idNS)}=\zero$ implies
that $\Lam$ has the same eigenbasis as $(\Cvq-\varepsilonprime \idNS)$ and thus the same eigenbasis as $\Cvq$ (e.g., \cite[Proof of Th.~1]{HuScJoUt08}).
We thus can write $\Cvq=\mb W\diag\{\zeta_i\}\mb W^\He$ and $\Lam=\mb W\diag\{\lam_i\}\mb W^\He$.
Setting $\frac{\partial \Phi}{\partial \Cvq^\Tr}=\zero$, we have
\begin{mueq}
\Cvq - \Lam = \Cvq^{(\Ms)} \tclf ~\Leftrightarrow~ \mb W (\diag\{\zeta_i-\lam_i\}) \mb W^\He = \mb V \diag\{\kappa_i\} \mb V^\He
\end{mueq}
which implies that $\mb W=\mb V$ and $\zeta_i=\kappa_i+\lam_i$ with $\lam_i\geq0$.
For all $i$ with $\kappa_i<\varepsilonprime$, primal feasibility $\zeta_i\geq\varepsilonprime$ requires $\lam_i>0$,
and complementary slackness $\lam_i(\zeta_i-\varepsilonprime)=0$ then yields $\zeta_i=\varepsilonprime$.
For all $i$ with $\kappa_i\geq\varepsilonprime$, we have $\zeta_i-\varepsilonprime=\kappa_i+\lam_i-\varepsilonprime$,
so that complementary slackness yields $\lam_i=0$ and $\zeta_i=\kappa_i$.

\bibliographystyle{IEEEtran}
\bibliography{IEEEabrv,ConfIEEE,literature}

\end{document}